\def\ps@headings{%
\def\@oddhead{\mbox{}\scriptsize\rightmark \hfil \thepage}%
\def\@evenhead{\scriptsize\thepage \hfil \leftmark\mbox{}}%
\def\@oddfoot{}%
\def\@evenfoot{}}
\newtheorem{definition}{Definition}
\newtheorem{proposition}{Proposition}
\newtheorem{corollary}{Corollary}
\newtheorem{lemma}{Lemma}
\newcommand{\alphabet}{\mathcal{A}}
\newcommand{\nodes}{\V}
\newcommand{\AD}{\mathcal{F}}
\newcommand{\V}{\mathcal{V}}
\newcommand{\E}{\mathcal{E}}
\newcommand{\Gg}{\mathcal{G}}
\newcommand{\N}{\mathcal{N}}
\renewcommand{\P}{\mathcal{P}}
\newcommand{\ca}{\lambda} 
\renewcommand{\P}{\mathcal{P}}   
\newcommand{\diam}{\text{diam }}
\newcommand{\voisin}[1]{Nei(#1)}  
\newcommand{\send}[3]{Send(#1,#2,#3)}
\newcommand{\tr}{\mathcal{T}}   
\newcommand{\pile}{H}   
\newcommand{\topp}[1]{Top(#1)}  
\newcommand{\toppp}[1]{Top^{-1}{(#1)}}  
\newcommand{\haut}[1]{h(#1)}  
\begin{document}

%

%
%
%

\author{\IEEEauthorblockN{Mohamed~Lamine~Lamali\IEEEauthorrefmark{1}, Simon~Lassourreuille\IEEEauthorrefmark{1}, Stephan~Kunne\IEEEauthorrefmark{2}, Johanne~Cohen\IEEEauthorrefmark{2}
\IEEEauthorblockA{\IEEEauthorrefmark{1}LaBRI-CNRS. Universit\'e de Bordeaux. France.}
\IEEEauthorblockA{\IEEEauthorrefmark{2}LRI-CNRS. Universit\'e Paris-Sud, Universit\'e Paris Saclay. France.}
}
\url{mohamed_lamine.lamali@u-bordeaux.fr }\ \ \
\url{johanne.cohen@lri.fr}
}

\title{A stack-vector routing protocol for automatic tunneling\thanks{This paper will appear in the proceedings of IEEE~INFOCOM~2019.}}

\maketitle

\begin{abstract}
In a network, a tunnel is a part of a path where a protocol is encapsulated in another one. A tunnel starts with an encapsulation and ends with the corresponding decapsulation. Several tunnels can be nested at some stage, forming a protocol stack. Tunneling is very important nowadays and it is involved in several tasks: IPv4/IPv6 transition, VPNs, security (IPsec, onion routing), etc. However, tunnel establishment is mainly performed manually or by script, which present obvious scalability issues. Some works attempt to automate a part of the process (e.g., TSP, ISATAP, etc.). However, the determination of the tunnel(s) endpoints is not fully automated, especially in the case of an arbitrary number of nested tunnels. The lack of routing protocols performing automatic tunneling is due to the unavailability of path computation algorithms taking into account encapsulations and decapsulations. There is a polynomial centralized algorithm to perform the task. However, to the best of our knowledge, no fully distributed path computation algorithm is known. 
Here, we propose the first fully distributed algorithm for path computation with automatic tunneling, i.e., taking into account encapsulation, decapsulation and conversion of protocols. Our algorithm is a generalization of the distributed Bellman-Ford algorithm, where the distance vector is replaced by a protocol stack vector. This allows to know how to route a packet with some protocol stack. We prove that the messages size of our algorithm is polynomial, even if the shortest path can be of exponential length. We also prove that the algorithm converges after a polynomial number of steps in a synchronized setting. 
We adapt our algorithm into a \textit {proto}-protocol for routing with automatic tunneling and we show its efficiency through simulations.
\end{abstract}

\begin{IEEEkeywords}
Tunneling; encapsulation; path computation; routing protocol; distributed algorithms.
\end{IEEEkeywords}

\section{Introduction}
\label{sec:inro}

Routing is one of the most important tasks in any network, and particularly in the Internet. Routing protocols are generally (distributed) versions of path computation algorithms in graphs. For instance, RIP relies on the distributed Bellman-Ford algorithm, while OSPF  uses Dijkstra's algorithm. 
These routing protocols were developed in the early years of the Internet. Thus, they work only in networks using the same \textit{protocol}\footnote{In this paper, \textit{communication protocols} and \textit{routing protocols} should not be confused. For simplicity, we refer to the first ones as \textit{protocols}, and to the second ones explicitly as \textit{routing protocols}.} (generally IP). However, Internet encompasses now several protocols. For instance, IPv4 and IPv6 coexist, and a lot of other situations involve interoperability of different protocols within the Internet. Thus, a path between a source and a destination may contain several portions using different protocols (for example, if the path crosses several domains or Autonomous Systems).
The mapping from a protocol to another one along a path is generally done in two ways: i) \textit{Conversion}: a packet of some protocol is \textit{converted} into a packet of another one (e.g., NAT-TP~\cite{RFC2766}); ii) \textit{Encapsulation}: a packet of some protocol is \textit{encapsulated} (or nested) in a packet of another one, thus being transparent to the crossed nodes until its \textit{decapsulation} (the reverse operation). All these operations are called \textit{adaptation functions} further in the paper.

A tunnel is a subpath beginning at an encapsulation and ending at the corresponding decapsulation. Tunneling is a ubiquitous  feature in the Internet nowadays. It is involved in virtual networks, security (an encryption can be modeled as an encapsulation), interoperability, etc. Tunnels can be nested to achieve different tasks. In this context, topology connectivity is not enough to ensure communication. Even if the network is (strongly) connected, there may not exist a suitable adaptation function sequence to reach some destination from some node. If a path has such a suitable sequence, it is said to be \textit{feasible}.
However, current routing protocols are unable to handle protocol heterogeneity. Nowadays, the endpoints of tunnels are manually configured or determined from a precomputed list. Several works try to automate a part of the process (e.g., TSP~\cite{rfc5572}, ISATAP~\cite{rfc5572}, etc.). They are referenced as automatic tunneling. However, even if the negotiation of some parameters  of the tunnel establishment are automated, such as tunnel type and DNS registration, the determination of the tunnel endpoints is not fully automated. Moreover, these approaches cannot handle an arbitrary number of nested tunnels.
The main reason is that the underlying algorithms of routing protocols cannot handle adaptation functions. The authors of~\cite{lamali2018algo} propose a centralized polynomial algorithm to solve path computation problem in a multiple protocol context. However, their approach is based on language theory, and is unlikely to be distributed. To the best of our knowledge, no distributed algorithm for fully automated tunneling, i.e., with automatic determination of the (possibly nested) tunnel endpoints was proposed before this work.

Our goal is to design a distributed algorithm for path computation with automatic tunneling. This corresponds to the classical \textit{All-Pairs Shortest Path} (APSP) problem, but in a network involving adaptation functions. To achieve this goal, we generalize the distributed Bellman-Ford algorithm, in order to take into account possible encapsulations, and to automatically establish (nested) tunnels. The \textit{distance-vector} is replaced by a \textit{stack-vector}. The route to follow for a packet is then determined by its destination and its protocol stack. 
\subsection*{Our contributions:}
\begin{enumerate}
	\item We design the first distributed algorithm for routing with automatic tunneling;
	\item We prove nontrivial bounds on the maximum protocol stack height and on the algorithm convergence: we show that the maximum protocol stack height (number of encapsulated protocols at the same time) of the shortest path (involving tunnels) between two nodes is at most $\ca n^2$, where $n$ is the number of nodes and $\ca$ is the number of protocols in the network. This implies that the maximum message size is also polynomial, despite the fact that the shortest path may be of superpolynomial length in this context;
	\item We design a \textit{proto}-protocol for routing with automatic tunneling and we evaluate its efficiency through simulations.
\end{enumerate}

The paper is organized as follows: Section~\ref{sec:problem} describes the problem and discusses the related work. Section~\ref{sec:model} details the model used through this paper and formalizes the notion of path feasibility. Section~\ref{sec:algorithm} describes the proposed algorithm and study its convergence and its message size. Section~\ref{sec:proto} describes a basic implementation of a routing protocol based on our algorithm, while Section~\ref{sec:simuls} presents the simulation results. Finally, Section~\ref{sec:conclusion} concludes the paper.

\section{The problem}
\label{sec:problem}

\subsection{Problem definition and illustration} 
We aim to illustrate the path computation problem through an example. Figure~\ref{fig:with_without} depicts a network with $8$ nodes\footnote{Note that, in this example, the underlying topology is not a symmetric directed graph, while in our model (see Section~\ref{sec:model}) the directed graph must be symmetric. In this context, a symmetric directed graph is defined as a graph where a link $(U,V)$ exists if and only if the reverse link $(V,U)$ exists. However, the example on Figure~\ref{fig:with_without} is simplified for readability purpose. It can be easily converted into a symmetric directed graph with the same properties.}. We want to compute a path from node $S$ to node $D$. Both of them use protocol $a$. However, the nodes connecting $S$ and $D$ use different protocols. And some nodes are able to map a protocol to another one. A packet of any protocol comprises a \textit{header} field and a \textit{data} field. The structure of the header field is specific to a protocol. Converting a packet of protocol $a$ to protocol $b$ consists in rewriting the header and converting all the information (for example, the node identifiers, the routing options, etc.). Encapsulating a packet of $a$ in $b$ consists in considering the whole packet (header and data) of $a$ as the data field of $b$, and then adding a header of protocol $b$ at the top of the packet.
For example, node $U_1$ is only able to encapsulate (a packet of) protocol $a$ in protocol $b$, while node $U_2$ is only able to receive and send protocol $b$ without any change of the possible encapsulated protocols (which are transparent to it). The capabilities (adaptation functions) of each node are listed above it. Each step of the path is associated with a \textit{protocol stack}, i.e., the current sequence of encapsulated protocols, the current protocol being at the top of the stack. In Figure~\ref{fig:without}, node $S$ emits (a packet of) protocol $a$, so at this stage the protocol stack contains only $a$. When the packet crosses node $U_1$ it is encapsulated in protocol $b$. Thus, the current protocol stack is $ab$ (from bottom to top). At $U_2$ the protocol stack remains the same, since node $U_2$ passively retransmits packets with current protocol $b$. Node $U_4$ then decapsulates protocol $a$ from $b$, thus the current stack is again $a$. Node $U_5$ receives protocol $a$ and it is only able to decapsulate protocol $b$ from $a$; however, there is no $b$ encapsulated in $a$ at this stage. This path is said to be \textit{not feasible}. The same problem would appear if a node receives a protocol while it is only able to handle another one. In contrast, a path is \textit{feasible} if at each stage, the current node can handle the current protocol stack, and the destination node receives a packet with no encapsulated protocols.
Figure~\ref{fig:with} illustrates a feasible path that involves a loop. This loop is necessary to collect the required encapsulations to cross the subpath $U_4\dots U_6$. This example illustrates some specificities of feasible paths. There is not always a feasible path between two nodes, even if the graph is connected. Moreover, the shortest feasible path (if any) can involve loops. And optimal feasible paths do not exhibit optimal substructure (a subpath of a feasible path is not necessarily feasible, since the protocol stacks at the start and the end of the subpath may be different). In practice, the protocol stack is the sequence of nested headers of a packet that underwent multiple encapsulations.

The problem we address is to compute the shortest feasible path (if any) between each pair of nodes in a distributed way, i.e, the \textit{All Pairs Shortest paths} (APSP) problem in a network involving adaptation functions. Our algorithm builds routing tables allowing a packet with some protocol stack to reach the destination through the shortest feasible path.
The main motivation of this work is to propose a generic algorithm and a basic protocol for routing. Given a network with different protocols and some adaptation functions on some nodes, the goal is to compute the shortest path between any pair of nodes. Tunnels should be automatically established during the routing table construction. This can have important applications on several fields: IPv4/IPv6 interoperability, VPNs, secured tunneling, emulation of lower layer protocols over IP~\cite{RFC3985}, etc.

\begin{figure*}[h!] 
	\centering
	\subfloat[\scriptsize Unfeasible path without loop.]{%
		\includegraphics[width=0.47\textwidth]{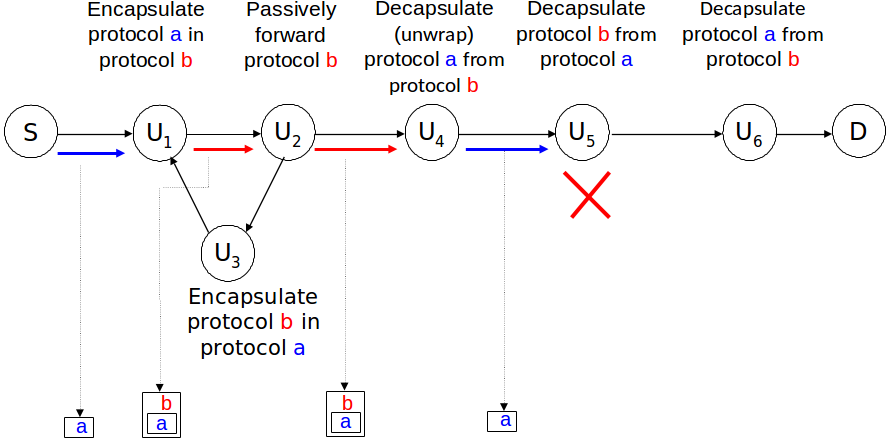}
		\label{fig:without}}
	\subfloat[\scriptsize Shortest feasible path with loop.]{%
		\includegraphics[width=0.47\textwidth]{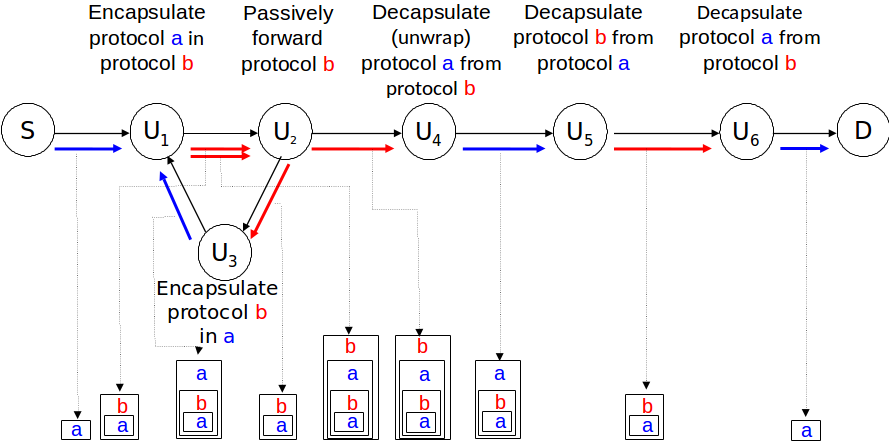}
		\label{fig:with}}
	\caption{A network requiring a shortest path with loop.}
	\label{fig:with_without} 
\end{figure*}

Note that our problem is different from path computation in multi-layer networks where the number of layers is fixed~\cite{jiang2018class}, while our problem is to resolve the APSP in a network where the number of needed layers is unknown, the layers being induced by the node encapsulation and decapsulation capabilities. Moreover, in our setting, even two layers with the same stack height may not be comparable, since two stacks such as $abb$ or $aba$ cannot be routed in the same way, for example.

\subsection{Related Work}

\subsubsection{Centralized approach}
In~\cite{K09}, The authors prove that the path feasibility problem under bandwidth constraint is $\mathsf{NP}$-hard. They propose an exponential algorithm using a  breadth-first search approach to solve the problem. The authors of~\cite{Lamali2013} prove that the problem is polynomial\footnote{More precisely, the decision version, i.e., deciding if a feasible path exists between two nodes, is polynomial. However, as stated in Section~\ref{sec:diff_app}, explicitly computing the path can be superpolynomial.} if the bandwidth constraint is relaxed. They model the network as a Push-Down Automaton (PDA), and use tools from language theory to find the shortest feasible path between two nodes. In~\cite{lamali2018algo}, they  widely generalize the previous algorithms, and prove that the feasible path problem under bandwidth constraint remains $\mathsf{NP}$-hard even with only two protocols and on symmetric directed graphs. The authors of~\cite{Iqbal2015} propose a matrix based model and an algorithm to compute feasible paths. However, the algorithm takes exponential time, and can only compute loopless feasible paths.
Unfortunately, the language theoretic approach seems unlikely to be distributed. Some works have been devoted to distributed PDAs (see~\cite{csuhaj2000parallel} for example). However, in these works, the nodes are themselves PDAs. In contrast, our model is equivalent to a PDA where the states collaborate in a distributed way to find an accepting transition sequence.

\subsubsection{Distributed algorithm related work}
In distributed literature, adaptation functions are generally not considered. The distributed Bellman-Ford algorithm (on which relies RIP) solves the APSP problem in $O(n)$ rounds in a synchronized setting if the message size is unbounded (e.g., the $\mathsf{LOCAL}$ model~\cite{peleg2000distributed}).
However, the APSP problem mostly received attention in the $\mathsf{CONGEST}$ model, where the message size is restricted to be logarithmic~\cite{abboud2016near,Elkin17}.
Unfortunately, these works cannot apply in our context since they do not take into account the adaptation functions and the interoperability issues.
\subsubsection{Networking related work}
Feasible path computation is a major challenge in networking. It underlies several technical problems: establishing tunnels, unifying control planes, etc. IPsec is a classical example. The endpoints of IPsec tunnels are often manually determined. The Tunnel Endpoints Discovery mechanism~\cite{bang2010study} allows to automatically determine IPsec tunnel endpoints. However, this can be only done through a single path. This path corresponds to the shortest path without encryption.
IPv4/IPv6 interoperability is mainly achieved through conversion of protocols or by tunneling. In the first case, an IPv4 (resp. IPv6) header is converted into an IPv6 (resp. IPv4) header. This operation is costly, and it requires to know the nodes able to perform the conversion. NAT-TP~\cite{RFC2766} uses this approach. Some other mechanisms (6over4~\cite{rfc2529}, 6to4~\cite{rfc3056}, etc.) use tunneling. 6to4 is referenced as automatic tunneling. Any 6to4 encapsulation capable router has an anycast address~\cite{rfc3068}. Discovering the tunnel endpoints is done by sending a packet to the anycast address. However, the end-hosts should know that they need a tunnel. Moreover, this mechanism is specific to IPv4/IPv6 interoperability, and cannot handle nested tunnels with arbitrary protocols. While our goal is to design a generic protocol without these limitations. Some other works attempt to automate a part of the process. For example, the TSP protocol~\cite{rfc5572} proposes the (automatic) negotiation of the tunnel parameters (e.g., keep alive duration) between the endpoints. However, no algorithm or process is provided to select the optimal (and may be nested) tunnel endpoints. The ISATAP protocol~\cite{rfc5579} proposes to select the endpoints from a predetermined list via a tunnel Broker. But, again, no algorithm is provided to automatically compute the tunnel endpoints. Moreover, these mechanisms are unable to handle an arbitrary number of nested tunnels. Fully automatic tunneling is still an open problem. The RFC~7059~\cite{rfc7059} proposes a comparison between the different IPv4/IPv6 tunneling mechanisms.

\subsection{Our approach}
\label{sec:our}
We propose to generalize the distributed Bellman-Ford algorithm. Since the path to follow by a packet depends upon the destination, but also upon the packet protocol stack, the distance-vector should be replaced by a stack-vector. Each node advises its neighbors that it can reach some destination, with some protocol stack, and at some cost. The rows of the routing table of each node should contain a next hop per destination and per protocol stack.
The termination of such an algorithm is an important issue. Since feasible paths may involve loops, the same destination can be reached with different protocol stack heights. The question is to know at what stack height to stop. We resolve the problem by showing a bound on the stack height of any shortest feasible path. The convergence speed of the algorithm depends upon the length of the shortest feasible paths. 
Our algorithm is generic: it can handle any number of protocols, and it does not limit the number of nested tunnels to find a feasible path.

\section{Model and definitions}
\label{sec:model}
This section describes the mathematical model used in this paper and formalizes the notion of path feasibility. We use the same notations and definitions as in~\cite{lamali2018algo}.
%
%

\subsection{Network model}
A network is modeled as a $4$-tuple $\N=(\Gg,\alphabet,\mathcal{F},w)$. The network topology is represented by a symmetric directed graph  $\Gg=(\V,\E)$. 
Each node in $ \V$ corresponds to a router, and each pair $(U,V) \in \E$ is a unidirectional link.
The number of nodes is denoted by $|\V|=n$, and the number of links is denoted by $|\E|=m$.
The set of protocols available in the network is denoted by $\alphabet=\{a,b, \dots \}$.  Its cardinality is denoted by $ |\alphabet|=\ca$. 
Each node $U\in\nodes$ has a set $\AD(U)$ of adaptation functions that it can perform. 
These functions are: 
\begin{itemize}
	\item \textit{Conversion}: A protocol $a$ is converted into a protocol $b$ without any change of the possible encapsulated protocols. It is denoted by $(a\rightarrow b)$ (example: IPv4/IPv6 conversion through NAT-PT). Observe that a classical retransmission without protocol change is a special case of conversion. It is denoted by $(a\rightarrow a)$.
	\item \textit{Encapsulation}: A protocol $a$ is encapsulated in  $b$. It is denoted by $(a\rightarrow ab)$ (e.g., IPv4/IPv6 encapsulation).
	\item \textit{Decapsulation}: A protocol $a$ is decapsulated from  $b$. It is denoted by $\overline{(a\rightarrow ab)}$.
\end{itemize}
The set of all adaptation functions available in the network is denoted by $\AD$. We denote the set of protocols that a node can receive by $In(U)$, and the set of protocols that a node can send as $Out(U)$.

Finally,  performing an adaptation function on a node $U$ has a cost defined by the function $w : \V\times\AD\times \V \rightarrow \Re_+$. The value $w(U,f,V)$ (where $U,V\in \V$ and $f\in\AD(U)$) corresponds to the cost of using link $(U,V)$ with adaptation function $f$ on node $U$. Hence, function $w$  represents any additive metric associated with both links and adaptation functions.  So the cost of a path is the sum of the costs of each triple $(U,f,V)$ involved in it in the network. For example, if one wants to minimize the number of encapsulations in the path, the cost function should be $w(U,f,V)=1$ if $f$ is an encapsulation and $0$ otherwise.

\subsection{The protocol stack}

A sequence of adaptation functions induces a protocol stack. For example, the sequence 
$(a\rightarrow a)(a\rightarrow ab)(b\rightarrow b)$
induces the stack $ab$ (from bottom to top). For each position $i$ in a path, $H_i$ denotes the protocol stack at this position, i.e.,  the protocol stack induced by $f_0\dots f_i$, and $h_i$ denotes the protocol stack height. The protocol at the top of a stack $\pile$ is denoted by $\topp{\pile}$ and the protocol just below $\topp{H}$ in the stack is denoted by $\toppp{H}$. The height of a stack $H$ is denoted by $\haut{H}$. The ``forbidden'' stack is denoted by $\emptyset$ (note that it should not be confused with the empty word $\epsilon$).

More formally, let $f$ be an adaptation function, and let $H$ be a stack and $H=H'.\topp{H}$\footnote{The notation ``$.$'' stands for a simple concatenation. For example, if $H=abab$ then $H.b=ababb$. }, where $H'$ is eventually empty.
We will also denote by $f$ the function taking as argument a stack and performing the adaptation function on this stack:
\begin{itemize}
\item if $f=(x\rightarrow y)$ and $\topp{H}=x$, then  $f(H)=H'.y$\\
\item if $f=(x\rightarrow xy)$ and $\topp{H}=x$, then  $f(H)=H.y$\\
\item if $f=\overline{(x\rightarrow xy)}$ and $\topp{H}=y$ and $\toppp{H}=x$, then  $f(H)=H'$\\
\item $f(H)=\emptyset$ otherwise.
\end{itemize}
We also denote by $\bar{f}$ the reverse function of $f$, i.e., if $H'=f(H)$ and $H'\neq \emptyset$ then $\bar{f}(H')=H$. Note that if $f$ is an encapsulation, then $\bar{f}$ is the corresponding decapsulation.
$f(H)=\emptyset$ means that the adaptation function $f$ cannot handle the stack $H$ (e.g., $f$ encapsulates $a$ in $b$ while $\topp{H}\neq a$). Note that $f(\emptyset)=\emptyset$ for any function in our context.

Thus, the protocol stack $H_i$ induced by a sequence of adaptation functions $f_0\dots f_i$, is recursively defined as following:
\begin{itemize}
\item $H_0=x$ if $f_0=(x\rightarrow x)$ and $H_i=f_i(H_{i-1})$
\end{itemize}

\subsection{Path feasibility}
\label{sec:path_feas}
In our context, a path  should take into account the adaptation function capabilities of the network nodes. A path should contain the list of adaptation functions involved in it. Thus, a path from node $S$ to node $D$ in a network $\N$ is a sequence of nodes and adaptation functions $Sf_0U_1f_1U_2f_2\dots U_\ell f_\ell D$ where each $U_i$, $i=1,\ldots,\ell$, is a node, and each $f_i$ is an adaptation function \footnote{The adaptation function $f_0$ is dummy. If node $S$ emits packets of protocol $a$, then $f_0$ is denoted by $(a \to a)$ by convention.}. When needed, a directed path in the graph without taking into account the adaptation functions (i.e., just a sequence of nodes) is referred to as a \textit{classical path}.
\begin{definition}
	A path $\P=Sf_0U_1f_1U_2f_2\dots U_\ell f_\ell D$ is \textit{feasible} if and only if: 
	\begin{enumerate}
		\item The sequence $SU_1U_2\dots U_\ell D$ is a classical path in $\Gg=(\nodes,\E)$ and each $f_i \in \AD(U_i)$;
		\item $H_\ell=x$ and $x\in In(D)$.
        \end{enumerate}
\end{definition}
Actually, the protocol sequences of feasible paths can be characterized as a context-free language~\cite{Lamali2013}. In Figure~\ref{fig:without}, the depicted path is: 
$$\small
S(a\rightarrow a)U_1(a\rightarrow ab)U_2(b\rightarrow b)U_4\overline{(a\rightarrow ab)}U_5\overline{(b\rightarrow ba)}U_6fD.
$$
It cannot be feasible for any adaptation function $f\in\mathcal{F}(U_6)$ since $\overline{(a\rightarrow ab)}$ appears before any encapsulation of protocol $b$. In contrast, Figure~\ref{fig:with} depicts the feasible path:
\begin{equation*}
\begin{split}
&S(a\rightarrow a)U_1(a\rightarrow ab)U_2(b\rightarrow b)U_3(b\rightarrow ba)U_1(a\rightarrow ab)\\
&U_2(b\rightarrow b)U_4\overline{(a\rightarrow ab)}U_5\overline{(b\rightarrow ba)}U_6\overline{(a\rightarrow ab)}D.\\
\end{split}
\end{equation*}
The corresponding protocol stacks are below the links in Figure~\ref{fig:with_without}.
\subsection{Problem formalization}
The cost of a path $\P=Sf_0U_1f_1U_2f_2\dots U_\ell f_\ell D$ from node $S$ to node $D$ is defined as $w(\P) \overset{def}{=}\sum_{i=0}^{\ell}w(U_i,f_i,U_{i+1})$ where $S=U_0$ and $D=U_{\ell+1}$. An optimal (i.e., shortest) feasible path $\P$ between two nodes is a feasible path that minimizes $w(\P)$.
When needed, we refer to the number of hops in a path $\P$ by $|\P|$. Note that $|\P|$ may be different from $w(\P)$.

The problem we aim to solve is the following: compute the routing tables at each node, such that for each pair of nodes $U$ and $V$, a packet from $U$ to $V$ following these routing tables follows an optimal path from $U$ to $V$. This corresponds to an APSP in a network involving adaptation functions.
Note that there is not always a feasible path between two nodes even if the underlying graph $\Gg$ is strongly connected.



\section{Stack-vector routing algorithm}
\label{sec:algorithm}

\subsection{Different approaches to distributed routing}
\label{sec:diff_app}
There are two main approaches in distributed routing: link-state routing and distance/path vector routing. The first one consists in spreading the whole topology then computing locally the path in a centralized way. However, this approach has a drawback: computing the whole shortest path at a node is not enough to route the packets. A node should know the next neighbor at each stage of the route. This is not possible unless the header of the packet contains the whole path to follow. 
The second approach is to use a Bellman-Ford algorithm (as RIP) where each node shares its current routing table with its neighbors. We cannot use this approach since the path leading to the destination depends upon the protocol stack of the packet to route. Storing the cost and the destination in the routing table is not enough.

Path vector routing consists in sharing not only the destination and the cost, but also the whole path to reach the destination. The main goal to do so is to avoid loops (if a node receives a path where it already appears, then it discards it). BGP uses this approach. This would be a possible solution for our problem, since a complete feasible path (the node sequence and the adaptation function sequence) characterizes the route to take for a packet. However, the shortest feasible path can be superpolynomial~\cite{lamali2018algo}. A lower bound of $\Omega((n/\ca )^\ca)$ is known for an arbitrary number of protocols $\ca$. Even for two protocols, there are shortest paths of length at least $\Omega(\sqrt{n}2^{\Omega(\sqrt{n})})$. The best available upper bound is $2^{O(\ca^2n^2)}-1$ (see \cite{lamali2018algo} for more details). Using a path vector protocol would lead to a superpolynomial message length.
We opted for a \textit{stack vector algorithm} (by analogy with distance vector protocols), where the destination, the cost and the protocol stack are stored in the routing table.

\subsection{The algorithm}
The initialization algorithm (Algorithm~\ref{algo:init}) allows each node $U$ to share with all its neighbors $\voisin{U}$ the set of protocols $In(U)$ that it can receive. Sending a message $(U,\pile,c)$ means that the sender can reach destination $U$ at cost $c$ if it receives a packet with protocol stack $\pile$. Thus, the initialization phase consists of informing its neighbors that it can receive any packet with protocol $x\in In(U)$ (without nested encapsulations) at cost $0$.

\begin{algorithm}
	\begin{algorithmic}[1]
		\ForAll{$x\in In(U)$}
			\ForAll{$V\in \voisin{U}$}
				\State $\pile \gets x$
				\State $\send{U}{\pile}{0}$ to $V$
			\EndFor
		\EndFor
	\end{algorithmic}
	\caption{Initialization algorithm of node $U$}
	\label{algo:init}
\end{algorithm}

Each row of the routing table $\tr$ of each node is a $5$-tuple $(D,\pile,c,V,f)$ where $D$ is the destination to reach, $\pile$ is the protocol stack needed to reach $D$, $c$ is the cost of the remaining path to reach the destination, node $V$ is the next neighbor (next hop) to reach the destination, and $f$ is the adaptation function to perform at the current node. 
The row is indexed by the pair $(D,\pile)$. Thus, $\tr(D,\pile)$ returns the row corresponding to destination $D$, and stack $\pile$. When a node tries to add a new row $(D,\pile,c,V,f)$ to its routing table $\tr$, it checks if the tuple $(D,\pile)$ is already in its routing table. If it is not, then it inserts the new row to the table. Otherwise, it compares the cost of the new route with the old one. The table is updated by replacing the old route by the new one if the new cost is lower. This step is done according to Algorithm~\ref{algo:add_row}.

\begin{algorithm}
	\begin{algorithmic}[1]
		\Require A row $(D,\pile,c,V,f)$
		
		\If{$(D,\pile)\notin\tr$} 
		\State Add $(D,\pile,V,c,f)$ to $\tr$
		\ElsIf{$\tr(D,\pile).cost>c$}
		\State $\tr(D,\pile).cost\gets c$
        \State $\tr(D,\pile).next\_hop\gets V$
        \State $\tr(D,\pile).function\gets f$
		\EndIf
	\end{algorithmic}
	\caption{Add a row to a routing table $\tr$ of node $U$}
	\label{algo:add_row}
\end{algorithm}

Algorithm~\ref{algo:construct_table} is the main algorithm. When receiving a new message $(D,H,c)$ from a neighbor $V$, node $U$ determines which of its adaptation functions it can apply to stack $H$ (lines 3-5). Then it computes the new cost of the remaining path by adding the cost of $(U,f,V)$ to the old cost (line~6). It tries to add the new row corresponding to the message and the chosen adaptation function to its routing table $\tr$, according to Algorithm~\ref{algo:add_row} (line~7). If the row corresponds to a new route in the table, then it sends the new message $(D,H,c)$ to its neighbors, indicating that it can reach the destination with the new stack $H$.
Observe that if $f$ is a decapsulation $\overline{(a\rightarrow ab)}$, and $U$ can reach $D$ with some protocol stack $H$ such that $\topp{H}=a$, then $U$ can handle a packet with stack $H.b$ to reach $D$. Similarly, If $f$ is a conversion $(a\rightarrow b$), and if $U$ can reach $D$ with some protocol stack $H$ such that $\topp{H}=b$ (i.e., $H=H'.b$ for some stack $H'$), then $U$ can handle a packet with stack $H'a$ to reach $D$. Thus, if node $U$ can reach a destination $D$ with some stack $H$, for each $f\in\AD(U)$, it should apply the reverse adaptation function $\bar{f}$ (line~4) to $H$ before sending a message to the other nodes.
Note that only stacks with heights less than or equal to $\ca n^2$ are kept and shared (line~5). In the next section, we prove that this height is sufficient to reach any destination with a shortest feasible path if there exists one.

\begin{algorithm}
	\begin{algorithmic}[1]
		\Loop
					\State Receive $({D,\pile,c})$ from $V$
					
		\ForAll{$f \in \AD(U)$}
        	\State $\pile\gets\bar{f}(\pile)$
			 \If{$\pile\neq\emptyset$ and $\haut{\pile}\leq \ca n^2$}
				 \State $c\gets c+w(U,f,V)$
                 \State Add row $(D,\pile,c,V,f)$ to $\tr$
                 \State \verb+#+ \textit{according to Algorithm~\ref{algo:add_row}}
				 \If{$\tr$ has been modified}
                 	\ForAll{$W\in \voisin{U}$}
                    	\State Send $(D,H,c)$ to W
                    \EndFor
                 \EndIf
			\EndIf
		\EndFor
		\EndLoop
	\end{algorithmic}
	\caption{Routing table construction algorithm of node $U$}
	\label{algo:construct_table}
\end{algorithm}

We assume that the packets to route are of the form (destination, protocol stack, payload); for example, $(D,H,data)$. Once all routing tables are computed, if node $U$ receives a packet with protocol stack $H$ and destination $D$, it first searches for tuple $(D,H)$ in its routing table. If there is no corresponding row, then node $U$ has no route for the destination with the received protocol stack. Otherwise, if the corresponding row is $(D,H,c,V,f)$, then it sends the packet $(D,f(H),data)$ to $V$. The routing procedure is illustrated by Algorithm~\ref{algo:route_packet}.

\begin{algorithm}
	\begin{algorithmic}[1]
		\State Receive a packet $(D,\pile,data)$
		\If{$(D,\pile)\notin\tr$} 
        	\State No route, discard the packet
		\Else
        	\State Let $(D,\pile,c,V,f)\in\tr$
            \State Send the packet $(D,f(\pile),data)$ to $V$.
		\EndIf
	\end{algorithmic}
	\caption{Routing a packet.}
	\label{algo:route_packet}
\end{algorithm}

\subsection{Correctness and complexity}

\subsubsection{Stack height upper bound and message size}
First, we prove that for any shortest feasible path, the maximum stack height reached along the path is polynomially bounded.
\begin{lemma}
	\label{cor:max}
	The shortest feasible path (if any) between two nodes reaches a maximum stack height of at most $\ca n^2$ protocols.
\end{lemma}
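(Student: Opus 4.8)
The plan is to model the shortest feasible path as a walk in a product structure that tracks, at each step, the current node together with enough information about the protocol stack to decide which adaptation functions are applicable. The key insight is that although the stack can grow, the \emph{useful} information at any point is bounded: what matters is the current top of stack and the way the stack was built up, and repeated configurations along a shortest path can be short-cut. Concretely, I would associate to each position $i$ of a feasible path the pair $(U_i, H_i)$, and argue that along a \emph{shortest} feasible path no configuration $(U,H)$ can repeat in a way that leaves the ``future'' of the path unaffected — otherwise the loop between the two occurrences could be excised, contradicting minimality. This is the standard pumping/loop-removal idea, but it has to be applied carefully because, as the paper stresses, feasible paths do \emph{not} have optimal substructure: cutting out a loop may change the stack and break feasibility further down.

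The crucial structural observation to make this work is the following layering argument. Split the path at the positions where the stack reaches a new minimum height that is never revisited afterwards; equivalently, consider the ``matching'' between each encapsulation and its corresponding decapsulation (the induced stack language is context-free, as noted after the Definition, so this bracket-matching structure is exactly the Dyck-like structure of the run). Between a matched encapsulation at level $k$ and its decapsulation, the stack never drops to level $k$; the portion of the path strictly inside this bracket is itself a walk in $\Gg$ during which the bottom $k$ symbols of the stack are frozen and irrelevant. First I would show that for any fixed ``frozen prefix'' and fixed top protocol, the reachability relation ``from node $U$ with top $x$ I can reach node $V$ with top $y$ without ever popping below the current level'' is a relation on $\V \times \alphabet$, hence has at most $\ca n$ distinct reachable states; so the innermost loop-free segments have length at most $\ca n$ (or $n$ per fixed protocol, summed over protocols). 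Then I would bound the number of \emph{nesting levels}: each time the stack height strictly increases to a value it will later come back down from, we are opening a new bracket, and I claim a shortest path opens at most $n$ brackets of each ``type'' — more precisely, the height can be bounded by $\ca n$ because beyond that a pair of identical $(\text{node},\text{top})$ configurations appears at two different heights with the same subsequent pop-behaviour, allowing a simultaneous excision of a matched encap/decap pair. Multiplying the bound on nesting depth (order $\ca n$, or $n$ per protocol) by the bound on the work done at each level gives the $\ca n^2$ bound claimed in the lemma.

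The main obstacle, and the step I expect to require the most care, is the loop-removal argument in the presence of the stack: I must exhibit, for any path exceeding stack height $\ca n^2$, a strictly shorter feasible path to the \emph{same} destination with the \emph{same} final (empty-encapsulation) stack. The naive ``delete the loop between two equal $(U_i,H_i)$'' fails when the stacks differ. The fix is to find two positions $i < j$ with $U_i = U_j$, with $H_i$ a prefix-suffix-compatible refinement of $H_j$ (same top symbol, and $H_j = H_i \cdot w$ for some $w$ that gets fully popped before the end), and such that the segment from $i$ to $j$ can be deleted \emph{together with} the matching popping segment later on — a ``synchronized double excision.'' Counting shows such a pair must exist once the height exceeds $\ca n^2$: along the path there are more than $\ca n^2$ positions at distinct heights within a single bracket, but only $\ca n$ possible $(\text{node}, \text{top-symbol})$ values and only $n$ relevant ``return nodes'' to the current level, so by pigeonhole a reusable configuration repeats. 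Once that combinatorial core is in place, the bound $\ca n^2$ follows; the rest is bookkeeping to confirm that the excised path is still a valid feasible path of strictly smaller length, contradicting the assumption that we started with a shortest one.
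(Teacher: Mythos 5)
Your final paragraph is essentially the paper's proof: around the peak, you pigeonhole over the $\ca n^2$ triples (last node leaving level $h$ on the way up, first node returning to level $h$ on the way down, top protocol at level $h$) to find two levels with identical triples, and the synchronized double excision of the intervening push segment and its matching pop segment yields a shorter feasible path with a lower peak, which is exactly the argument in Appendix~A. The ``nesting depth times work per level'' framing in your middle paragraph is not what produces the height bound and should be dropped; the $\ca n^2$ comes purely from the size of that triple space.
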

\begin{proof}
	See Appendix~A.
\end{proof}

\begin{corollary}
	The maximum message size is in $O(\ca \log \ca\  n^2)$.
\end{corollary}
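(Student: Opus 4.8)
The plan is to decompose a protocol message into its constituent fields and bound the bit-length of each. Recall that every message exchanged in Algorithm~\ref{algo:construct_table} has the form $(D,H,c)$, where $D$ is a destination node, $H$ is a protocol stack, and $c$ is a cost. First I would bound the destination field: since the network has $n$ nodes, identifying $D$ requires $\lceil \log_2 n\rceil = O(\log n)$ bits. Next, the cost field: under the standard routing assumption that each elementary cost $w(U,f,V)$ is a polynomially bounded value (in practice a single bounded metric word, as in RIP), $c$ is representable in $O(\log n)$ bits, which will turn out to be a lower-order term.

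The crux is the stack field $H$. Here I would invoke Lemma~\ref{cor:max}: the stacks carried by messages are those arising along shortest feasible (sub)paths, and line~5 of Algorithm~\ref{algo:construct_table} explicitly discards every stack of height exceeding $\ca n^2$ — a truncation that, by Lemma~\ref{cor:max}, loses no shortest feasible path. Hence any transmitted $H$ satisfies $h(H)\le \ca n^2$. Each of the at most $\ca n^2$ entries of $H$ is one symbol of the alphabet $\alphabet$, of size $\ca$, so it is encoded in $\lceil\log_2\ca\rceil$ bits; therefore the whole stack fits in $O(\ca n^2\log\ca)$ bits.

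Summing the three contributions yields a message size of $O(\ca n^2\log\ca + \log n) = O(\ca\log\ca\; n^2)$, as claimed. I do not expect any real obstacle here beyond Lemma~\ref{cor:max} itself, which is assumed: the remaining argument is the elementary bit-accounting above, the only point needing a word of care being the mild boundedness assumption required to keep the cost field within the stated bound (without it one still gets a polynomial, but larger, bound, since a cost on a feasible path of superpolynomial length is written with polynomially many bits).
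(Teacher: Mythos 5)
Your proof is correct and follows essentially the same route as the paper's: both decompose the message into the destination identifier ($O(\log n)$ bits), the protocol stack (height at most $\ca n^2$ by Lemma~\ref{cor:max}, each symbol in $O(\log\ca)$ bits), and a cost field assumed bounded. Your added remark on what happens without the cost-boundedness assumption is a minor refinement, not a different argument.
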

\begin{proof}
A message contains node identifiers, that are in $O(\log n)$ if the number of nodes is $n$. A protocol stack, its height is at most $\ca n^2$ according to Lemma~\ref{cor:max}. The identifier of a protocol is in $O(\log \ca)$ for a number $\ca$ of protocols. We assume that the cost is bounded.
\end{proof}

\subsubsection{Convergence and correctness} We will prove that in a synchronized setting (for example the $\mathsf{LOCAL}$ model in distributed computing~\cite{peleg2000distributed}) where all nodes receive messages at the same time $t$, then process them and send them at the same time $t+1$, Algorithm~\ref{algo:construct_table} correctly converges in polynomial time\footnote{Time is measured in number of synchronized rounds.} according to the network size and its diameter.

We define the diameter of a network $\N$, denoted by $\diam \N$, as the length (in number of hops) of the shortest feasible path that maximizes the number of hops:
$$\diam \N \overset{def}{=}\max_{\P \text{ shortest feasible}} |\P|\ .$$

\begin{proposition}
Algorithm~\ref{algo:construct_table} computes the correct routing tables after $O(\ca n^2\ \diam \N)$ rounds.
\label{prop:conv}
\end{proposition}
\begin{proof}
See Appendix~B.
\end{proof}

\section{A proto-protocol}
\label{sec:proto}
\subsection{Practical limitations and algorithm adaptation}
\label{sec:limit}
The main issue for the implementation of the proposed algorithm is that the stack height and the message length bounds are too large, even if they are polynomial. For example, in a network of $100$ nodes with only $2$ protocols, the stack height may reach $2\times10^4$ protocols. This is not sustainable for real applications. Moreover, for a stack height $h$, there are $\ca^h$ possible stacks, i.e., in the worst case, there may be $n\left(\frac{1-\ca^{\ca n^2 +1}}{1-\ca}-1\right)$ rows in a routing table. This may induce an exponential number of exchanged messages. Finally, as demonstrated by the authors of~\cite{lamali2018algo}, $\diam \N$ can be superpolynomial in the size of the network $\N$. These bounds are tight, i.e., it is possible to exhibit a network where the shortest feasible path reaches a stack height of $\ca n^2$. Thus, these limitations are not due to our algorithm but are inherent to the problem.

However, the simulations performed in \cite{lamali2018algo} show that such instances are extremely unlikely to appear. Thus, we propose to set the maximum stack height as a parameter of the protocol. A small value of this parameter is enough to compute the shortest path in most cases. We should bound the maximum stack height by a constant $h_{\max}$ for any real implementation. Thus, the condition $h(H)\leq \ca n^2$ in Algorithm~\ref{algo:construct_table} (line~5) should be replaced by $h(H)\leq h_{\max}$. In such a case, the maximum number of different stacks of height $h_{\max}$ would be $\ca^{h_{\max}}$, and the maximum number of rows in a routing table would be $n\left(\frac{1-\ca^{h_{\max}+1}}{1-\ca}-1\right)$, which is linear in $n$ and polynomial in $\ca$.

\subsection{Routing \textit{proto}-protocol specification}
The main requirements are that the nodes and the protocols must have a unique identifier. For example, 1 byte for the protocol identifiers and 16 bytes for the node identifiers. The routers involved in the routing protocol should have a specific multicast address.
The routing protocol messages that advise a route must contain: the destination of the route, the protocol stack needed for a packet in order to reach the destination, and the cost of the route. In addition, some classical parameters can be exchanged: keep alive duration, emitted message timestamps, etc. The maximum stack height should be set at the start of the process.
The routing table must contain entries indexed by the destination and the protocol stack of a received packet. Each entry must contain the cost, the next hop, and the adaptation function to perform before sending the packet to the next hop.
\begin{figure}
\centering
\begin{tikzpicture}[scale=0.6, every node/.style={scale=0.6},
node distance = 0pt,
    BC/.style = {
        decorate,
        decoration={calligraphic brace, amplitude=1.2mm,
        pre =moveto, pre  length=0.75pt,
        post=moveto, post length=0.75pt,
        raise=1mm,
        #1},
        very thick,
        pen colour={red}
                  },
  BC/.default = mirror,
    LN/.style = {inner xsep=4pt, outer sep=0pt},
                        ]
\matrix (m) [matrix of nodes, inner sep=0pt,
             nodes={text depth=0.8ex, text height=1em, minimum width=25ex,
                    inner ysep=1pt, inner xsep=4pt, outer sep=0pt, anchor=west},
             nodes in empty cells,
             column sep=-\pgflinewidth,
             row sep= -\pgflinewidth,
             ]
{
    Destination             \\
    Destination (cont.)         \\
    Source          \\
    Source (cont.)     \\
    Prot. 1          \\
    Prot. 2   \\
    \dots      \\
     Prot. $i$ \\
     Header length of prot. 1 \\
     Prot. 1 header \\
     \dots \\
     Header length of prot. 2 \\
     Prot. 2 header \\
     \dots \\
};

\draw           (m-1-1.north west) -- (m-1-1.north east);
\draw           (m-1-1.south west) -- (m-1-1.south east);
\draw           (m-2-1.south west) -- (m-2-1.south east);
\draw           (m-3-1.south west) -- (m-3-1.south east);
\draw           (m-4-1.south west) -- (m-4-1.south east);
\draw           (m-5-1.south west) -- (m-5-1.south east);
\draw           (m-6-1.south west) -- (m-6-1.south east);
\draw           (m-7-1.south west) -- (m-7-1.south east);
\draw           (m-8-1.south west) -- (m-8-1.south east);
\draw           (m-9-1.south west) -- (m-9-1.south east);
\draw           (m-10-1.south west) -- (m-10-1.south east);
\draw           (m-11-1.south west) -- (m-11-1.south east);
\draw           (m-12-1.south west) -- (m-12-1.south east);
\draw           (m-13-1.south west) -- (m-13-1.south east);

\draw           (m-1-1.north west) -- (m-6-1.south west);
\draw [dotted] (m-7-1.north west) -- (m-7-1.south west);
\draw           (m-8-1.north west) -- (m-10-1.south west);
\draw [dotted] (m-11-1.north west) -- (m-11-1.south west);
\draw           (m-12-1.north west) -- (m-13-1.south west);
\draw [dotted] (m-14-1.north west) -- (m-14-1.south west);

\draw           (m-1-1.north east) -- (m-6-1.south east);
\draw [dotted] (m-7-1.north east) -- (m-7-1.south east);
\draw           (m-8-1.north east) -- (m-10-1.south east);
\draw [dotted] (m-11-1.north east) -- (m-11-1.south east);
\draw           (m-12-1.north east) -- (m-13-1.south east);
\draw [dotted] (m-14-1.north east) -- (m-14-1.south east);

\draw[BC={}]    (m-5-1.north -| m.east) --
                    node[right=3mm] {Protocol identifier stack}
                (m-8-1.south -| m.east);
\draw[BC={}]    (m-9-1.north -| m.east) --
                    node[right=3mm] {Protocol header stack}
                (m-14-1.south -| m.east);
    \end{tikzpicture}
    \caption{A header of a packet to route.}
   \label{fig:header}
\end{figure}

\subsection{Routing a packet}
The packet to route contains a stack of headers of different protocols. It can be seen as a meta-header. It must contain:
\begin{itemize}
	\item The unique identifier of the destination node;
    \item The source (even if it does not impact the routing process, since it is a per destination/stack routing);
	\item The stack height: the current stack height of the packet 
	\item The protocol stack: the stack of identifiers of the protocols corresponding to the nested headers. This will speed up the routing process, since it avoids to access to the whole header stack;
	\item The header stack: each encapsulated header preceded by its length;
	\item The payload of the inner packet.
\end{itemize}
Figure~\ref{fig:header} illustrates such a meta-header.

\section{Simulations}
\label{sec:simuls}
In order to evaluate the efficiency of our algorithm, we performed simulations with different parameters.

\subsection{Simulation methodology}
\label{sec:methodology}
All the networks used in the simulations are generated according to the following steps:
\begin{enumerate}
\item We generate a random undirected graph of a given size according to a preferential attachment mechanism (the Barab\`asi\--Albert model~\cite{Barabasi1999}), where each added node is attached to $5$ existing nodes;
\item The graph is then converted into a symmetric directed graph. Each undirected link is converted into two directed links;
\item For a given number $\ca$ of protocols, there are $3\ca^2$ possible adaptation functions. Each adaptation function is available on a node with a given probability $p$.
\end{enumerate}

The algorithm is implemented in Python 3.4.5, using the NetworkX package\footnote{\url{https://networkx.github.io/}}. The implementation is done in a distributed fashion: each node is simulated by a thread, and a directed link $(U,V)$ is implemented as a queue where $U$ can only write, and $V$ can only read. The simulations were performed on a multi-core server with CPUs 1.59GHz.

The input parameters of the algorithm are: the number of nodes, the probability $p$ of availability of an adaptation function, the number of protocols, and the maximum stack height $h_{\max}$. 
The main output results are the convergence time, and the percentage of times where the algorithm finds the shortest path between the network extremities. Note that, if the algorithm does not find the shortest feasible path, it may be because there is no feasible path in the given network. The probability of existence of a feasible path according to different parameters can be found in~\cite{lamali2018algo}.
All the results are averaged over 1000 runs.

\subsection{Convergence time}
\begin{figure*} 
	\centering
	\subfloat[\scriptsize Convergence time according to network size.]{%
		\includegraphics[width=0.33\linewidth]{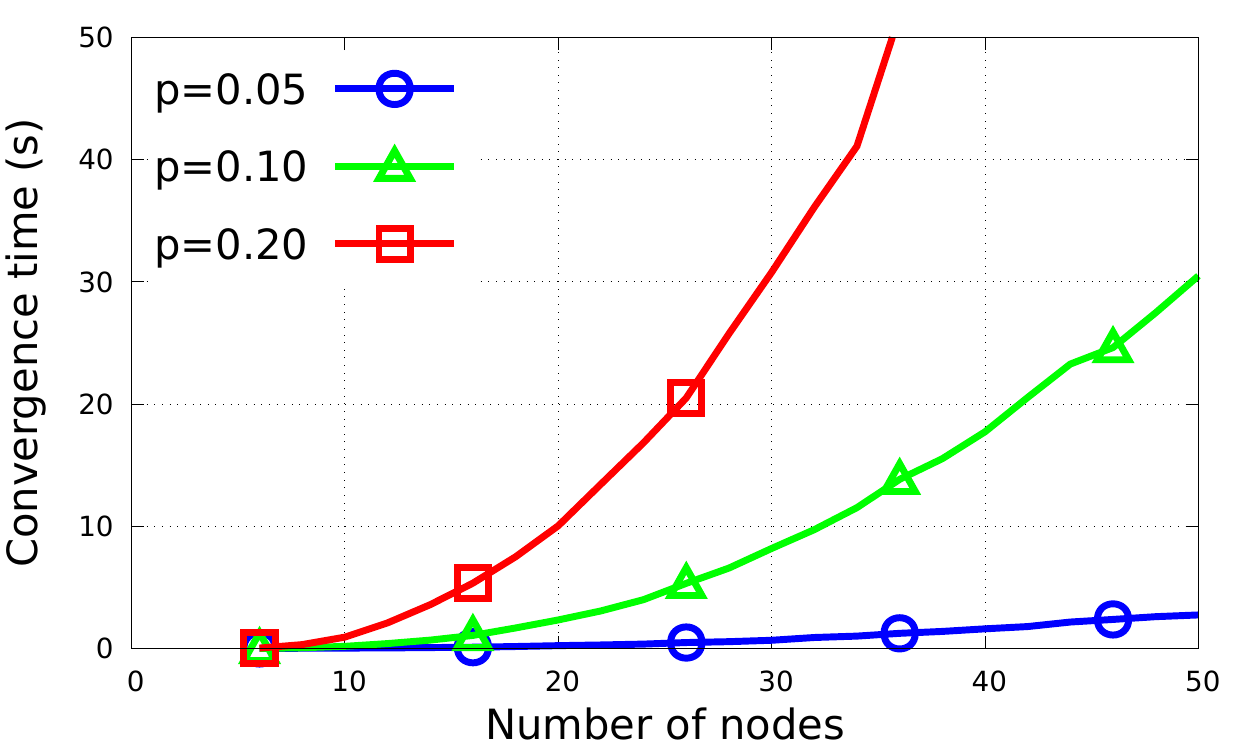}
		\label{fig:taille_vs_temps}}
	\subfloat[\scriptsize Convergence time according to probability $p$.]{%
		\includegraphics[width=0.33\linewidth]{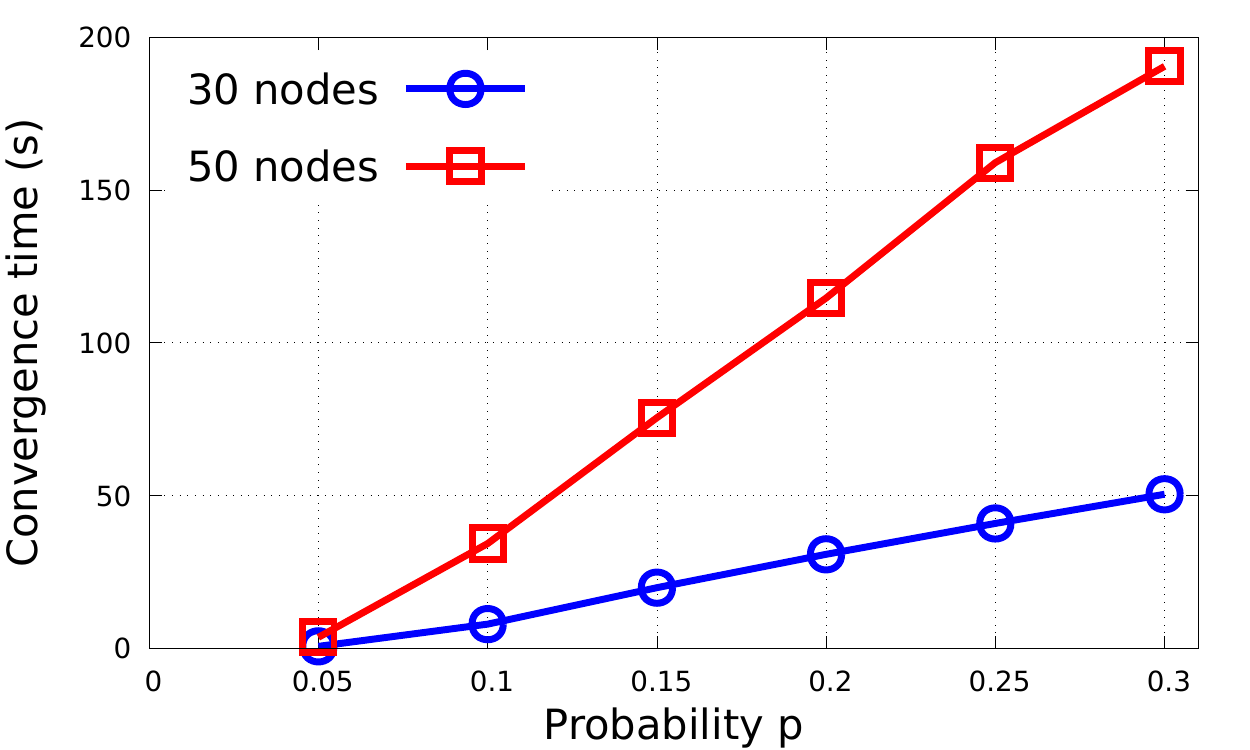}
		\label{fig:p_vs_temps}}
	\subfloat[\scriptsize Convergence time according to $h_{\max}$. Note the logarithmic scale on the $y$-axis.]{%
		\includegraphics[width=0.33\linewidth]{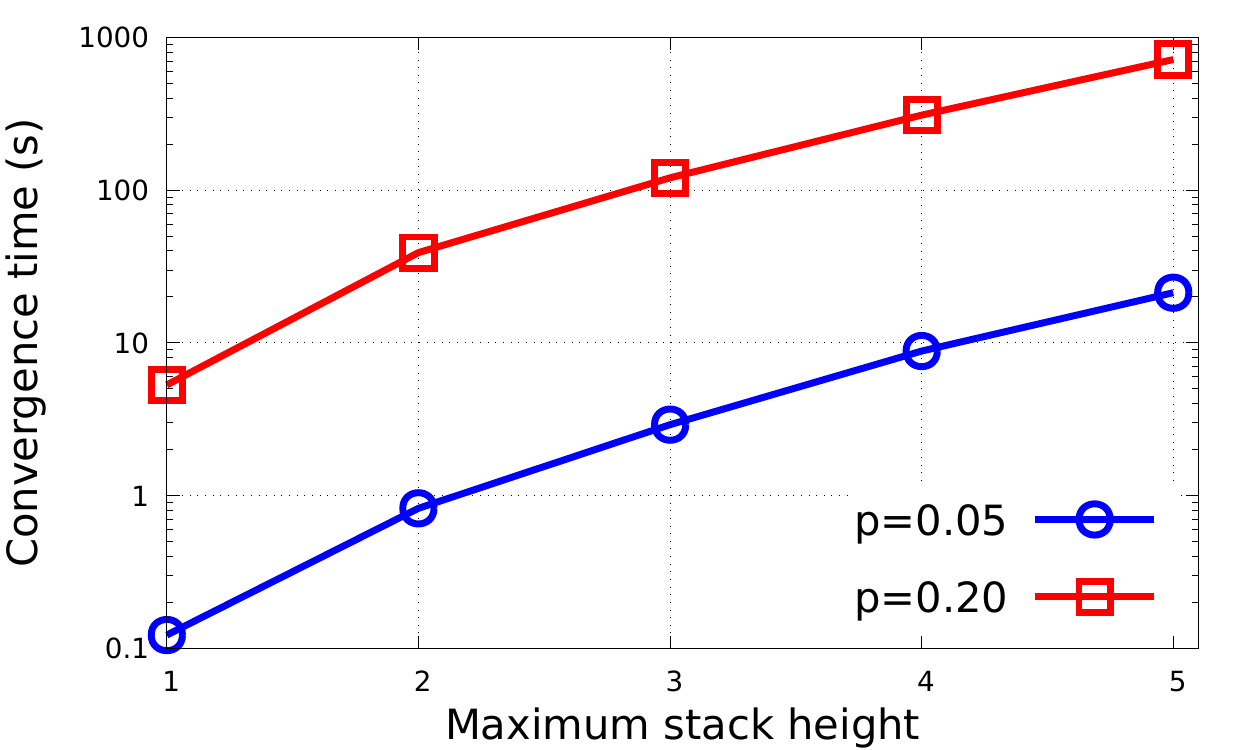}
		\label{fig:hauteur_vs_temps}}
	\caption{The convergence time of Algorithm~\ref{algo:construct_table} according to different parameters.}
	\label{fig:convergence_time} 
\end{figure*}

Figure~\ref{fig:convergence_time} shows the convergence time of the algorithm according to different parameters.  Figure~\ref{fig:taille_vs_temps} shows the convergence time according to the network size (number of nodes) in three cases: $p=0.05$, $p=0.10$, and $p=0.20$.  The maximum stack height $h_{\max}$ is set to $3$, and the number of protocols is set to $2$. It appears that the convergence time hugely depends on the parameter $p$. For $p=0.05$, the convergence time is $2.74$s in a network of $50$ nodes; while for $p=0.20$, the convergence time is $113$s. However, note that if $p=0.20$, the average number of adaptation functions per node is $2.4$, which is unrealistic, since only a few number of nodes should be able to perform conversions/encapsulations. Actually, Figure~\ref{fig:p_vs_temps} shows the impact of the probability $p$ (and the average number of adaptation functions per node) on the convergence time (with $h_{\max}=3$ and $2$ protocols). For $p=0.05$, the convergence time is around $0.7$s (resp. around $3$s) in a network of $30$ (resp. $50$) nodes; while for $p=0.30$, the convergence time is around $50$s (resp. around $3$min) in a network of $30$ (resp. $50$) nodes. We can see that the processing time hugely depends upon the number of adaptation functions per node.
Figure~\ref{fig:hauteur_vs_temps} shows the impact of the parameter $h_{\max}$ on the convergence time. The number of protocols is set to $2$ and the network has $50$ nodes. Note the logarithmic scale on the $y$-axis. The maximum stack height hugely impacts the convergence time. With $p=0.05$ (resp. $p=0.20$), the convergence time is $0.7$s (resp. around $30$s) if the maximum stack height is set to $2$. However, the convergence time is around $21$s (resp. $11$min) if the maximum stack height is set to $5$.

\subsection{Algorithm efficiency}

\begin{figure} 
	\centering
	\subfloat[\scriptsize $\%$ of found feasible paths according to probability $p$.]{%
		\includegraphics[width=0.48\linewidth]{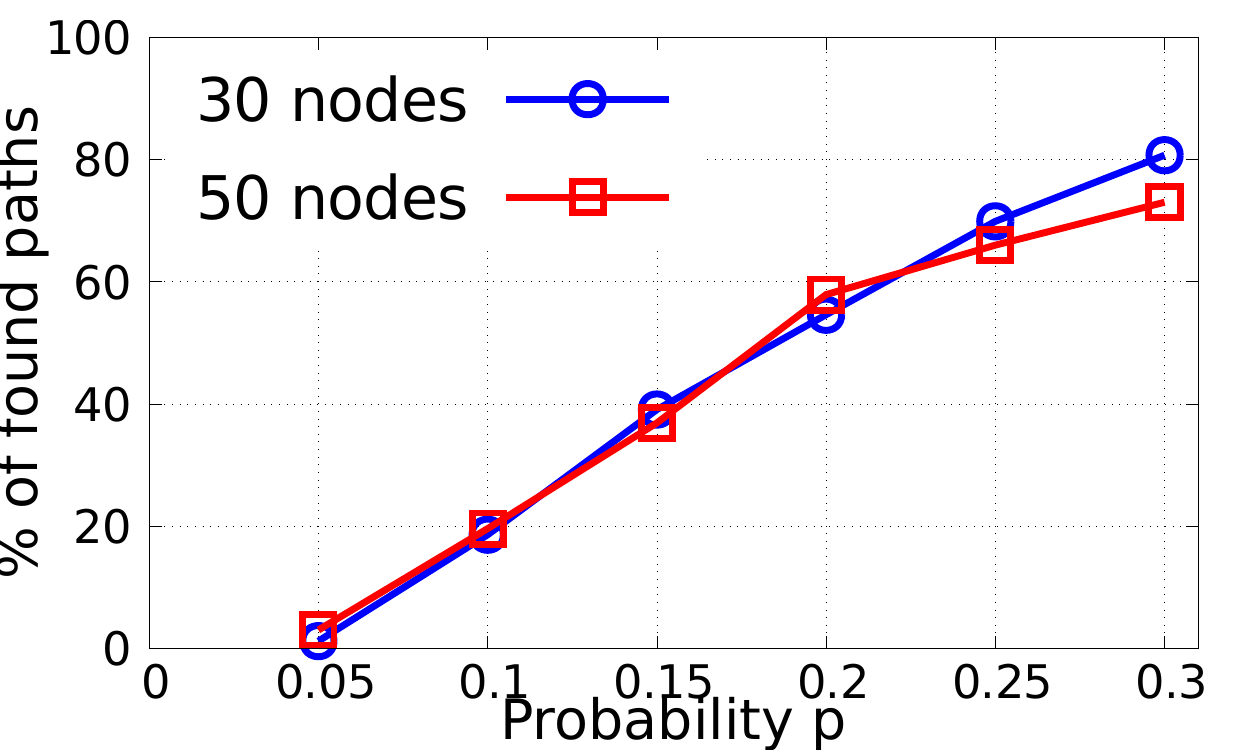}
		\label{fig:p_vs_chemin}}\hfil \hfil
	\subfloat[\scriptsize $\%$ of found feasible paths according to $h_{\max}$.]{%
		\includegraphics[width=0.48\linewidth]{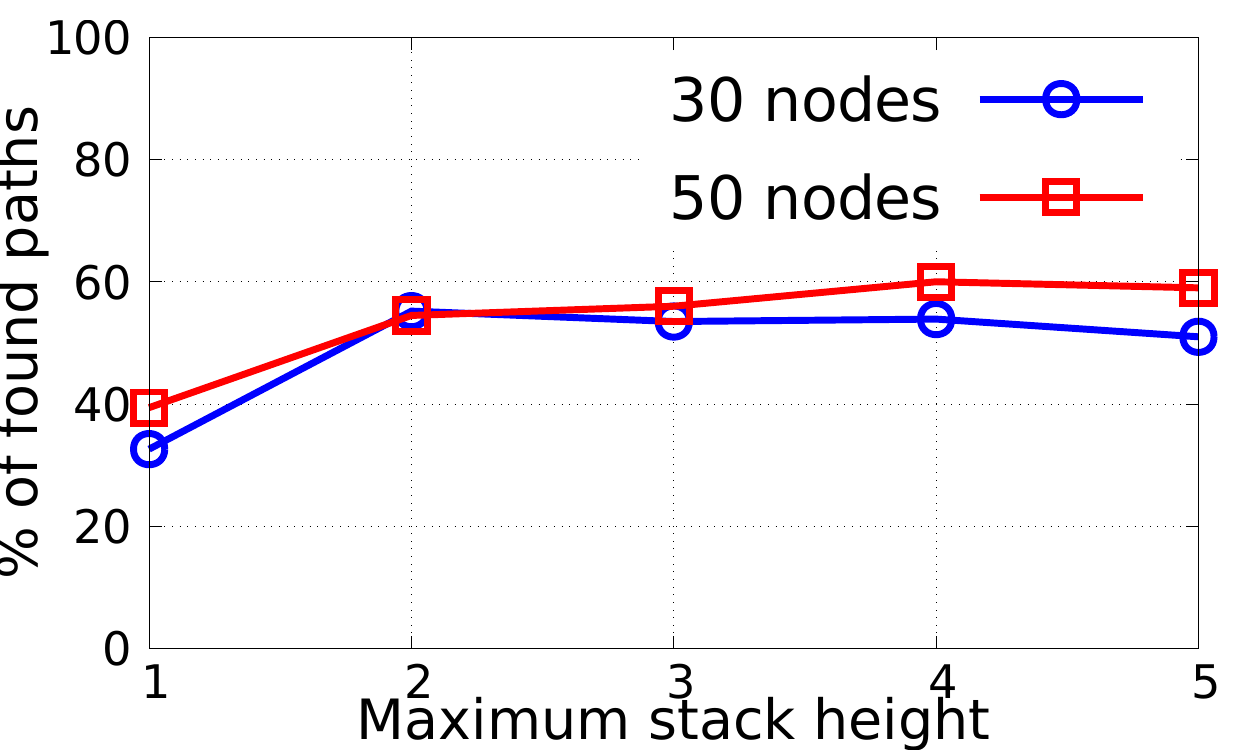}
		\label{fig:hauteur_vs_chemin}}
	\caption{The $\%$ of found feasible paths by Algorithm~\ref{algo:construct_table} according to different parameters.}
	\label{fig:path_found}  
\end{figure}

Figure~\ref{fig:path_found} shows the average number of times where Algorithm~\ref{algo:construct_table} found a shortest feasible path between the network extremities, according to different parameters. For each run, we check if a shortest feasible path is computed between the network extremities. Note that, if there is no path, it can be due to two different reasons: i) there is no feasible path between these nodes; ii) the shortest feasible path reaches a stack height larger than $h_{\max}$.

Figure~\ref{fig:p_vs_chemin} shows the impact of probability $p$ on the number of found paths. For $30$ nodes (resp. $50$ nodes) and $h_{\max}$ set to $3$, the number of runs where a shortest feasible path was found is $1\%$ (resp. $3\%$) if $p=0.05$. While it is $80\%$ (resp. $73\%$) if $p=0.30$. Note that the probability of existence of a feasible path is much smaller if $p=0.05$ than if $p=0.30$.
Figure~\ref{fig:hauteur_vs_chemin} shows the impact of the maximum stack height $h_{\max}$ on the number of found paths. The probability $p$ is set to $0.20$ and there are $2$ protocols. Obviously, the algorithm finds more feasible paths if it is allowed to explore paths with higher stacks. However, the difference between stack heights $3$ and $5$, for example, is negligible. It seems that almost all the shortest feasible paths are of maximum stack height at most $3$ (i.e., two nested tunnels).

These simulations shows that the problem is complex, and the convergence time can be prohibitive, especially if the number of adaptation functions per node is large. However, a convergence time of few minutes is sustainable in a network with infrequent topology changes.

\section{conclusion}
\label{sec:conclusion}
Nowadays, the Internet encompasses several protocols. The interoperability between theses protocols is an important issue, and is ensured thanks to adaptation functions. A path between two nodes may involve different protocols at different stages. The current routing protocols are not able to automatically compute such paths due to the lack of distributed algorithms taking into account the adaptation functions. In this paper, we design the first fully distributed algorithm taking into account these functions. Our algorithm builds at each node a local routing table that allows to route a packet following the optimal path. Moreover, we prove that our algorithm converges polynomially in the size of the network and its diameter, and it uses messages of polynomial size, despite the fact that the shortest feasible paths can be of superpolynomial length. We propose a basic implementation of our algorithm as a stack-vector routing protocol, and we evaluate its efficiency through simulations. We believe that this work can have an important impact on protocols such as TSP, on IPv4/IPv6 interoperability, and on automatic tunneling more generally. 
As a future work, we plan to study the application of our algorithm to secured communication, more precisely to nested encrypted tunnels. In this paper, we assumed that each node had access to the full protocol stack of a packet that it receives. This implies that the data can be encrypted several times but that the headers must not be encrypted. This is not suitable for security reasons. Thus, the main possible improvement to our algorithm would be to adapt it to the case in which all the encapsulated headers are also encrypted. This means that any node has only access to the top (outer) protocol. This would have applications to distributed \textit{onion routing}. 

{\bf Acknowledgement.} The authors would like to thank G\'eraud S\'enizergues for his invaluable help and explanations about the proof of Lemma~\ref{cor:max}. The first author was partially supported by the  H\'ERA project, funded by The French National Research Agency. Grant no.: ANR-18-CE25-0002.

\bibliographystyle{IEEEtran}
\bibliography{automata_infocom}

\section*{Appendix~A}
\label{appA}
\subsection*{Proof of Lemma~\ref{cor:max}}
\begin{proof}
	We will prove that if there is a feasible path from $S$ to $D$ that reaches a maximum stack height $h_{\max}>\ca n^2$, then there is a shorter feasible path from $S$ to $D$ that reaches a maximum stack height $h_{\max}'<h_{\max}$.
	The main ideas of the proof are results from language theory, and are related to the pumping lemma. They can be found in~\cite{ges90} and \cite{amarilli2012proof} for example. 
	
	Let $\P$ be a feasible path of length $\ell$ between two nodes. Suppose that $\P$ reaches a maximum stack height $h_{\max}>\ca n^2$ at some position $j$ (i.e., $h_j=h_{\max}$).
	For each stack height $h$ such that $h_{\max}-\ca n^2\leq h \leq h_{\max}$, let $i_h$ (resp. $k_h$) be the last position before (resp. the first position after) $j$ reaching the stack height $h$. More formally:
	
	\begin{itemize}
		\item $i_h=\max\{i\leq j \mid h_i=h\}$
		\item $k_h=\min\{k\geq j \mid h_k=h\}$
	\end{itemize}
	And let $\langle h \rangle$ be the $3$-tuple $(U_i,U_k,a_i)$ where:
	\begin{itemize}
		\item $U_i$ (resp. $U_k$) is the current node at position $i_h$ (resp. $k_h$),
		\item $a_i$ is the current protocol at position $i_h$ (i.e., $a_i=\topp{H_i}$). Note that $\topp{H_i}=\topp{H_k}$ since $f_i$ is followed by an encapsulation, and $f_k$ follows a decapsulation. 
	\end{itemize}
	For any $h$, there are $\ca n^2$ possible values of $\langle h\rangle$. However, since $h_{max}-\ca n^2\leq h \leq h_{max}$, $h$ can take $\ca n^2+1$ different values. Thus, there are two stack heights $h\neq h'$ such that $\langle h \rangle=\langle h' \rangle$. Let $h'<h$ for convenience. By definition of $i_h$ and $k_h$, the subpath from position $i_h$ to $k_h$ does not involve a protocol stack smaller than $h$. Thus, it depends only on the part of the stack above $h$, on the top of the stack, and on the current node at position $i_h$ (recall that $i_h$ is the \textit{last} position before $j$ reaching stack height $h$, the following positions until $k_h$ involve stacks higher than $h$). Likewise, the subpath between positions $i_{h'}$ and $k_{h'}$ does not involve a stack smaller than $h'$. The part of the stack below height $h$ (resp. $h'$) is transparent to the nodes between $i_h$ and $k_h$ (resp. $i_{h'}$ and $k_{h'}$). The same sequence of adaptation functions and nodes leading from $i_h$ to $k_h$ can also lead  from $i_{h'}$ to $k_{h'}$. It is then possible to shortcut the part of the path between $i_{h'}$ and $i_h$ together with the part from $k_{h}$ to $k_{h'}$.
Thus, the path:
\begin{equation*}
\begin{split}
&\P'= Sf_0U_1f_1\dots U_{i_{h'}-1}f_{i_{h'}-1}U_{i_{h}}f_{i_{h}}\dots U_jf_j\dots U_{k_{h}}f_{k_{h}}\\
&U_{k_{h'}+1}f_{k_{h'}+1}\dots D
\end{split}
\end{equation*}
is feasible, shorter than $\P$, and reaches a maximum stack height $h_{max}-(h-h')$.
\end{proof}

\section*{Appendix~B}
\label{appB}
\subsection*{Proof of Proposition~\ref{prop:conv}}

First we prove the following lemma:
\begin{lemma}
Let $\N$ be a network. If after a number of rounds $t$, all the shortest feasible paths reaching a maximum stack height at most $h_{\max}$ are already computed, then all the shortest feasible paths reaching a maximum stack height at most $h_{\max}+1$ will be computed after $t+O(\diam \N)$ rounds.

\label{lemme:induction}
\end{lemma}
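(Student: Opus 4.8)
The plan is to argue by induction on the structure of the shortest feasible paths, using the fact that a shortest feasible path reaching maximum stack height at most $h_{\max}+1$ can be decomposed into pieces whose internal behaviour is already ``known'' to the algorithm once all shortest feasible paths of max height $\le h_{\max}$ have been computed at time $t$. Concretely, fix a shortest feasible path $\P$ from $S$ to $D$ with maximum stack height exactly $h_{\max}+1$ (if the max height is $\le h_{\max}$ there is nothing new to prove). Let $j$ be a position where $\P$ attains height $h_{\max}+1$. I would look at the portions of $\P$ on which the stack never drops below a given threshold, exactly as in the proof of Lemma~\ref{cor:max}: the ``top slice'' of $\P$ around $j$ is a subpath whose behaviour depends only on the top of the stack and the current node, and this subpath, considered on its own with a truncated stack, is itself a shortest feasible (sub)path reaching a maximum stack height at most $h_{\max}+1$ but with fewer ``critical'' levels — or more usefully, its complementary pieces (before the first encapsulation into level $h_{\max}+1$ and after the last decapsulation out of it) are shortest feasible paths of max height at most $h_{\max}$, hence already computed by hypothesis at time $t$.

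**Key steps, in order.** First I would make precise the claim that since Algorithm~\ref{algo:construct_table} propagates reachability information ``backwards'' (a node learns it can reach $D$ with stack $H$ from a neighbour $V$ that can reach $D$ with stack $\bar f(H)$), the routing table entry $\tr(D,H)$ corresponding to a given shortest feasible path segment is created as soon as the entry for the ``next'' segment exists and one more round elapses. Second, I would argue that any shortest feasible path $\P$ of max height $\le h_{\max}+1$ can be written as a concatenation $\P = \P_1 \cdot \Q \cdot \P_2$ where $\P_1$ and $\P_2$ are (shortest) feasible path fragments reaching max height $\le h_{\max}$, and $\Q$ is the ``excursion'' to level $h_{\max}+1$; crucially $\Q$ has length $O(\diam\,\N)$ — this is the subtle point, see below. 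Third, using the induction hypothesis that at time $t$ all the relevant max-height-$\le h_{\max}$ entries exist, I would show the entry for the start of $\P_2$ exists at time $t$, then walk backwards along $\Q$ step by step: each step of $\Q$ adds one round, so after $t + |\Q| = t + O(\diam\,\N)$ rounds the entry at the junction of $\P_1$ and $\Q$ exists, and since $\P_1$'s entries were already present at time $t$, the full entry for $\P$ (i.e. $\tr(D,H_S)$ at $S$) is present by time $t + O(\diam\,\N)$. Finally I would note that only the cost-minimality bookkeeping of Algorithm~\ref{algo:add_row} needs checking: if a cheaper route of the same $(D,H)$ appears later it can only help, and the shortest-feasible-path entry, once installed, is never overwritten by a worse one.

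**The main obstacle** is justifying that the excursion $\Q$ to the new level $h_{\max}+1$ has length $O(\diam\,\N)$, i.e. $O(|\P^*|)$ where $\P^*$ ranges over shortest feasible paths. Naively the subpath between the first entry into level $h_{\max}+1$ and the last exit from it could be long. The right way to see it is: the excursion $\Q$, read with its stack truncated just below level $h_{\max}$, is itself a feasible path (from the node entering level $h_{\max}+1$ to the node leaving it, with a one-symbol stack), and if it were longer than the corresponding shortest feasible path between those two nodes we could splice in the shorter one and contradict minimality of $\P$ — exactly the shortcutting argument of Lemma~\ref{cor:max}. Hence $|\Q| \le \diam\,\N$ by definition of $\diam\,\N$. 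Care is needed because ``truncating the stack'' changes which node counts as source/destination and one must check the truncated excursion genuinely satisfies the feasibility definition (it starts with a dummy $(x\to x)$ where $x$ is the top protocol, ends with the stack reduced to that same $x$, and $x$ is in the appropriate $In(\cdot)$ set); once that is verified the length bound, and with it the $O(\diam\,\N)$ round bound for gaining one stack level, follows.
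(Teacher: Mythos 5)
Your strategy is essentially the paper's: induct on the maximum stack height, isolate the excursion of $\P$ above level $h_{\max}$, invoke the hypothesis for the low-height parts, and charge $O(\diam \N)$ extra rounds to absorb the excursion. Two points of comparison. First, your decomposition $\P=\P_1\cdot\Q\cdot\P_2$ with $\P_1,\P_2$ of maximum height at most $h_{\max}$ is false in general: a shortest feasible path of maximum height $h_{\max}+1$ may contain several disjoint excursions to level $h_{\max}+1$, in which case $\P_1$ or $\P_2$ itself reaches $h_{\max}+1$. The paper handles this with an explicit second step (concatenation of several already-computed pieces whose total hop count is at most $\diam\N$); you need the analogous patch, e.g., treat all excursions at once and observe that the sum of their lengths is at most $|\P|\le\diam\N$. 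Second, the ``main obstacle'' you identify is not one: $\Q$ is a subpath of $\P$, $\P$ is itself a shortest feasible path, and $\diam\N$ is defined as $\max_{\P\text{ shortest feasible}}|\P|$, so $|\Q|\le|\P|\le\diam\N$ with no splicing needed. (The splicing/truncation argument is what the paper uses for a different purpose --- to claim the \emph{interior} of the excursion is an already-computed shortest feasible path so that only $O(1)$ extra rounds are charged; your hop-by-hop propagation of the destination-$D$ advertisement through $\Q$ is closer to what Algorithm~\ref{algo:construct_table} actually does, at the cost of the $|\Q|$ rounds you correctly account for.) One further caveat applies equally to your last step and to the paper's: the entries ``for $\P_1$'' guaranteed at time $t$ are indexed by the endpoint of $\P_1$ as destination, whereas routing along $\P$ requires entries indexed by $D$; the destination-$D$ information must still traverse $\P_1$ backwards hop by hop, but since $|\P_1|+|\Q|+|\P_2|=|\P|\le\diam\N$ the stated bound survives. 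With the multi-excursion case repaired, your argument establishes the lemma at the same level of rigor as the paper's own proof.
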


\begin{proof}
Computing a shortest feasible path of maximum stack height $h_{\max}+1$ from shortest feasible paths of maximum stack height $h_{\max}$ can be done in two steps:
\begin{enumerate}
\item The new path is a path of maximum stack height $h_{\max}$ following an encapsulation and followed by a decapsulation;
\item Concatenation of several new paths of maximum stack height at most $h_{\max}+1$.
\end{enumerate}
Suppose that after some rounds, all the shortest feasible paths that reach a maximum stack height $h\leq h_{\max}$ are computed. Let $\P=Sf_0U_1f_1U_2f_2\dots U_\ell f_\ell D$ be a shortest feasible path from node $S$ to $D$ that reaches a maximum stack height $h_j$ at some position $j$, such that $h_j>h_{\max}$. Let $i_h$ (resp. $k_h$) be the last position before (resp. the first position after) $j$ that reaches stack height $h=h_j-h_{\max}$, and follows an encapsulation (resp. is followed by a decapsulation). More formally:
\begin{itemize}
	\item $i_h=\max\{i<j\mid h_i=h_j-h_{\max} \text{ and }h_{i-1}=h_i-1\}$
	\item $i_k=\min\{k>j\mid h_k=h_j-h_{\max} \text{ and }h_{k+1}=h_k-1\}$
\end{itemize}
Then the path $U_{i_h}f_{i_h}\dots U_jf_j\dots f_{k_h-1}U_{k_h}$ is feasible and is already computed, i.e., node $U_{i_h}$ knows that it can reach $U_{k_h}$ by emitting a packet of some protocol $x$, and $(U_{k_h},x,c,U_{i_h+1},f)$ is in its routing table for some cost $c$ and some protocol $x$. 

On the other hand, node $U_{k_h}$ already knows that it can reach node $U_{k_h+1}$ with some protocol $x$, since it received the message $(U_{k_h+1},x,0)$ from $U_{k_h+1}$ at the first round, thanks to Algorithm~\ref{algo:init}. Thus, at round $t$, node $U_{i_h+1}$ knows that it can reach node $U_{k_h+1}$ with some protocol stack. After two rounds, through node $U_{i_h}$, node $U_{i_h-1}$ will know that it can reach node $U_{k_h+1}$ with some protocol $x$.

Now it remains to prove that if $\P_1,\dots,\P_d$ are shortest feasible paths already computed, and the last node of $\P_i$ (call it $D_i$) is the first node of $\P_{i+1}$, then the shortest feasible path $\P'=\P_1\dots\P_d$ (if any) is computed after $O(\diam \N)$ rounds. It means that the first node of $\P_1$ knows that it can reach the last node of $\P_d$ by emitting a suitable protocol. By the same argument, the first node of $\P_1$ will know that it can reach the last node of $\P_d$ after at most $|\P_1,\dots,\P_d|$ rounds. Since, by definition, $|\P_1,\dots,\P_d|\leq \diam \N$, this concludes the proof.
\end{proof}

Recall now Proposition~\ref{prop:conv}:
\setcounter{proposition}{0}
\begin{proposition}
Algorithm~\ref{algo:construct_table} computes the correct routing tables after $O(\ca n^2\ \diam \N)$ rounds.
\label{prop:conv}
\end{proposition}

\begin{proof}
We prove by induction that after $O(\ca n^2\ \diam \N)$ rounds, all the shortest feasible paths that reach a maximum stack height $h\leq h_{\max}$ are computed, i.e., the routing tables are able to route packets following these paths.

{\noindent \bf Basis:} Suppose that there is a feasible path $\P=Sf_0U_1f_1U_2f_2\dots U_\ell f_\ell D$ from $S=U_0$ to $D=U_{\ell+1}$ that keeps stack height of $1$ (i.e., there is neither encapsulations nor decapsulations along the path). 
Since, by definition, $|\P|\leq \diam \N$, after at most $\diam \N$ rounds, the routing table of $S$ contains the row $(D,x,c, U_1,w(\P))$ for $0\leq i\leq \ell$ and some protocol $x\in In(S)$, since $\P$ is feasible.
	
{\noindent \bf Induction}: Suppose that after $t$ rounds, all the shortest feasible paths that reach a maximum stack height at most $h_{\max}$ are computed. By Lemma~\ref{lemme:induction}, all the shortest feasible paths that reach a maximum stack height of $h_{\max}+1$ are computed after $t+O(\diam \N)$ rounds. By Lemma~\ref{cor:max}, the maximum stack height of a feasible shortest path (if any) is $\ca n^2$. This concludes the proof.
\end{proof}

\end{document}